\crefname{figure}{Fig.}{Fig.}
\Crefname{figure}{Fig.}{Fig.}
\pgfplotsset{compat=newest}
\renewcommand{\vec}[1]{\boldsymbol{\mathrm{#1}}} 
\newtheorem{prop}{Proposition}
\title{\LARGE \bf Analytically Modeling Unmanaged Intersections \\ with Microscopic Vehicle Interactions}
\author{Changliu Liu and Mykel J. Kochenderfer
\thanks{C. Liu and M. Kochenderfer are with the Department of Aeronautics and Astronautics, Stanford University, CA 94305 USA (e-mail: \tt\small changliuliu, mykel@stanford.edu).}%
}
\begin{document}

\maketitle

\begin{abstract}
With the emergence of autonomous vehicles, it is important to understand their impact on the transportation system. However, conventional traffic simulations are time-consuming. This paper introduces an analytical traffic model for unmanaged intersections accounting for microscopic vehicle interactions. The macroscopic property, i.e., delay at the intersection, is modeled as an event-driven stochastic dynamic process, whose dynamics encode the microscopic vehicle behaviors. The distribution of macroscopic properties can be obtained through either direct analysis or event-driven simulation. They are more efficient than conventional (time-driven) traffic simulation, and capture more microscopic details compared to conventional macroscopic flow models. We illustrate the efficiency of this method by delay analyses under two different policies at a two-lane intersection. The proposed model allows for 1) efficient and effective comparison among different policies, 2) policy optimization, 3) traffic prediction, and 4) system optimization (e.g., infrastructure and protocol).
\end{abstract}

\section{Introduction}
With the emergence of autonomous vehicles, it is important to understand how the microscopic interactions of those autonomous vehicles affect the delay of the macroscopic traffic flow, especially at unmanaged intersections. 

The literature contains many traffic models that can support the analysis of delay and congestion \cite{hoogendoorn2001state}. There are two major types of traffic models: 1) microscopic simulation models where every car is tracked and 2) macroscopic flow models where traffic is described by relations among aggregated values such as flow speed and density, without distinguishing its constituent parts. The major advantage of microscopic simulation models is the precise description of inter-vehicle interactions. Such models have been widely adopted in evaluating the performance of autonomous vehicles \cite{gora2016traffic}. 
However, it can be time-consuming to obtain the micro-macro relationships by simulation. Only ``point-wise" evaluation can be performed in the sense that a single parametric change in vehicle behavior requires new simulations. To gain a deeper understanding of the micro-macro relationships, an analytical model is desirable. 

Macroscopic flow models provide a tractable mathematical structure with few parameters to describe interactions among vehicles. Those models usually come in the form of partial differential equations. However, it remains challenging to model intersections. Existing methods introduce boundary constraints to represent intersections \cite{CORTHOUT2012343, flotterod2011operational}. However, these models can only tolerate simple first-in-first-out (FIFO) policies at intersections. To consider other kinds of policies, the vehicles need to be treated as particles that interact among one another, which has not been captured by existing flow models. 

\begin{figure}[t]
\vspace{-10pt}
\begin{center}
\subfloat[\label{fig: intersection a}]{
\includegraphics[width=3cm]{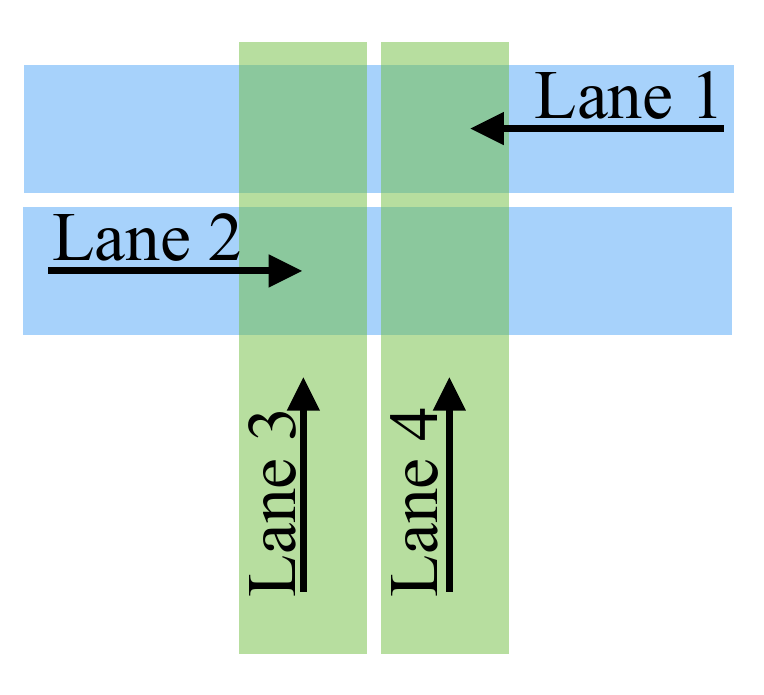}}
\subfloat[\label{fig: intersection b}]{
\includegraphics[width=5cm]{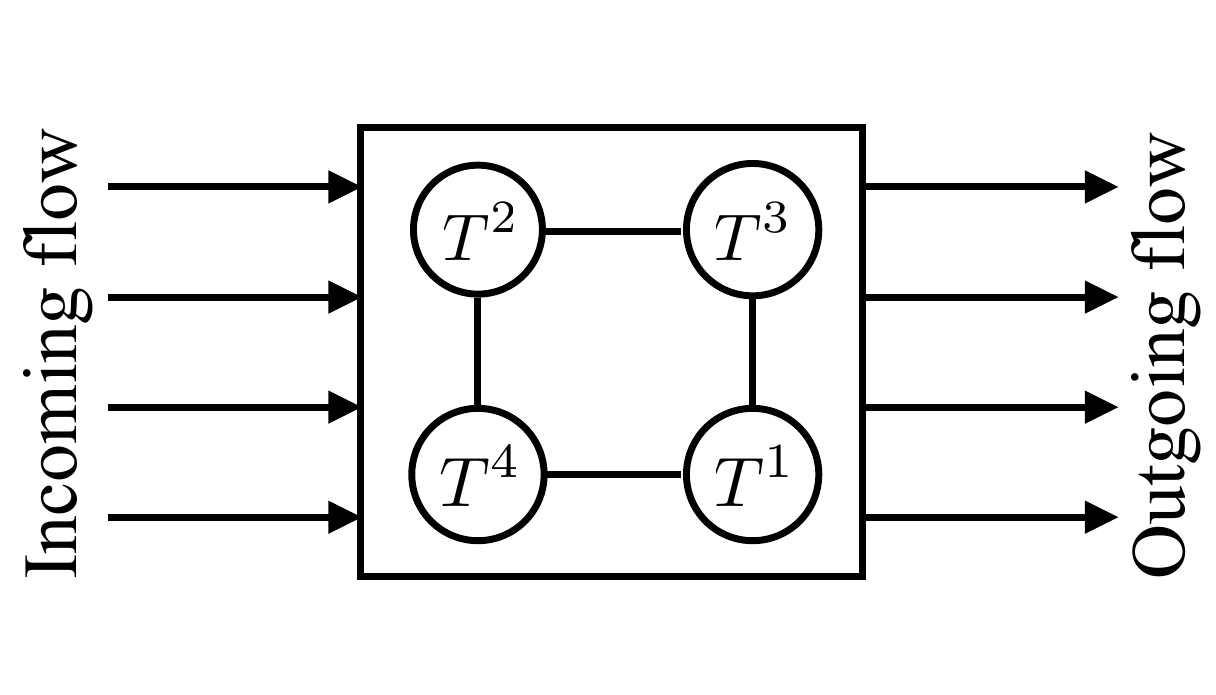}}
\caption{Intersection scenario. (a) Road topology. (b) Conflict graph.}
\label{fig: intersection}
\vspace{-10pt}
\end{center}
\end{figure}

\begin{figure}[t]
\begin{center}
\subfloat[\label{fig: time a}]{
\includegraphics[width=2.7cm]{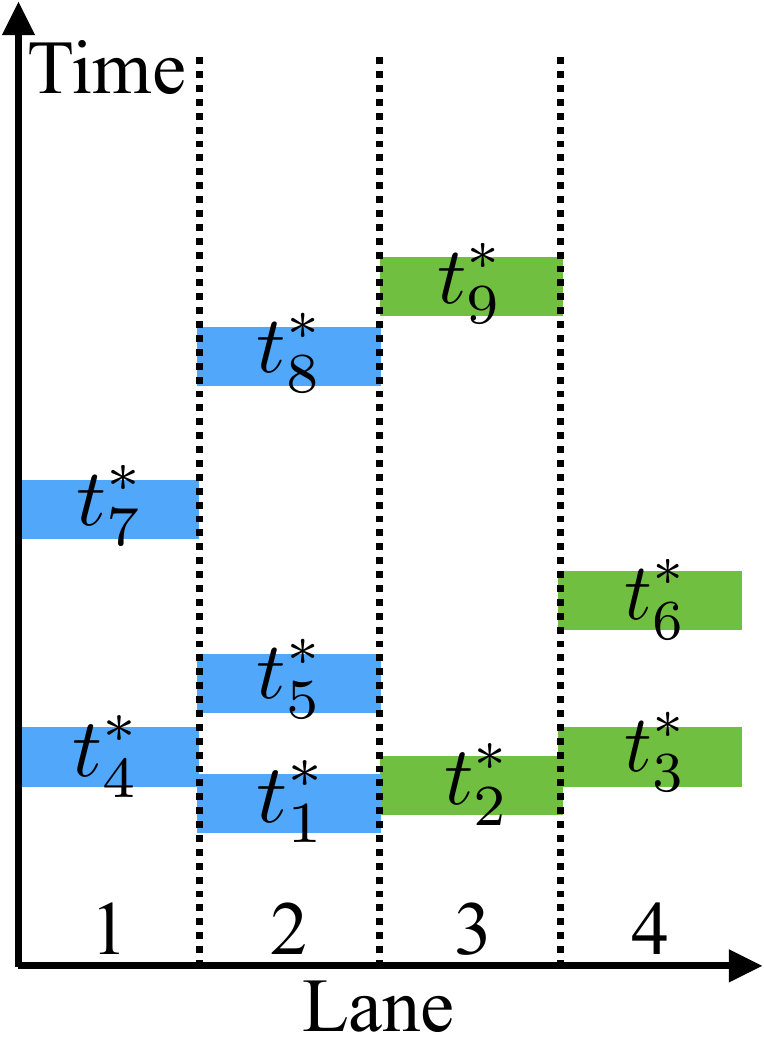}}
\subfloat[\label{fig: time b}]{
\includegraphics[width=2.7cm]{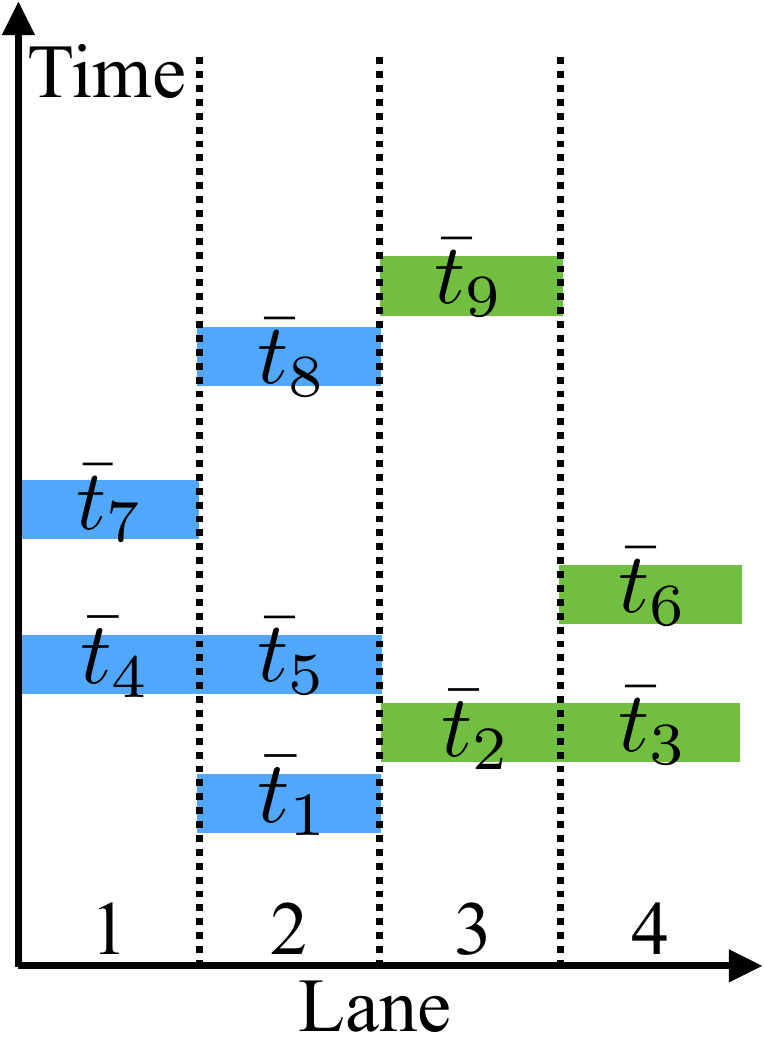}}
\subfloat[\label{fig: time c}]{
\includegraphics[width=2.7cm]{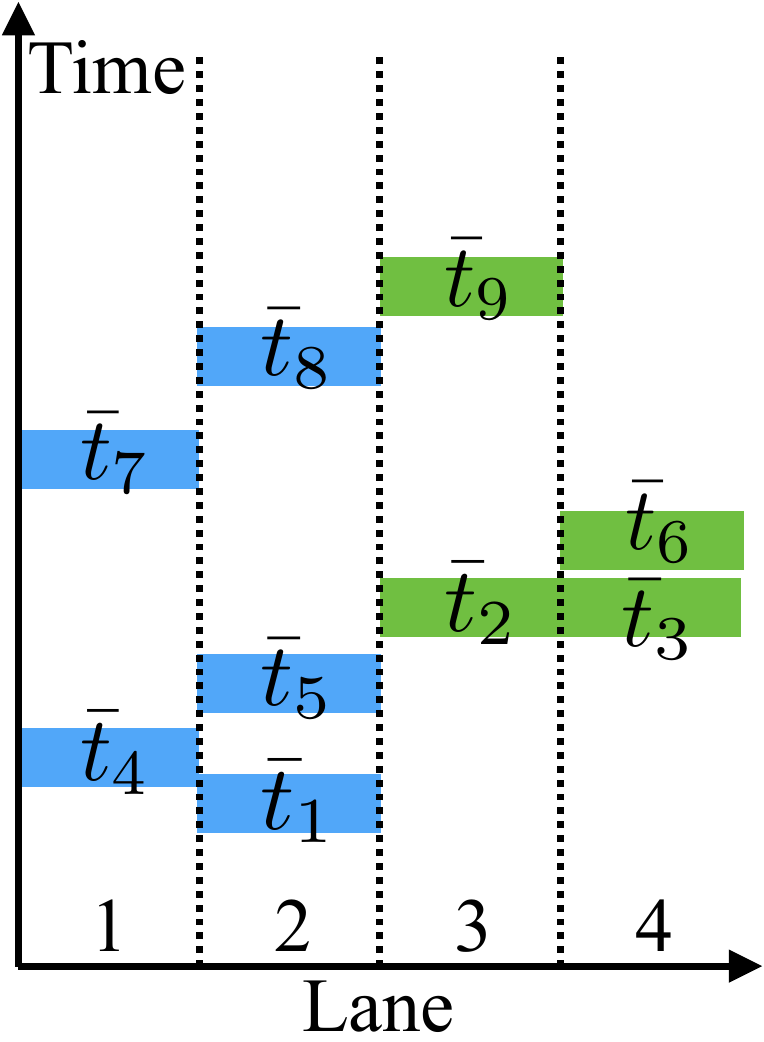}}
\caption{The time of occupancy at the intersection. (a) The desired time of occupancy. (b) The actual time of occupancy under the FIFO policy. (c) The actual time of occupancy under the FO policy.}
\label{fig: time}
\vspace{-10pt}
\end{center}
\end{figure}

This paper introduces an analytical stochastic continuous-time, discrete-event traffic model. We use it to describe delays at unmanaged intersections under different microscopic vehicle behaviors. The model considers microscopic interactions and is analytical, which absorbs the advantages of both the microscopic simulation models and macroscopic flow models. With this model, we can better understand how the microscopic behavior design of a single vehicle affects the macroscopic transportation system. In addition to direct analysis, we can perform event-driven simulation under this model, which is more efficient than conventional time-driven traffic simulation. The policies under consideration are not required to have closed-form solutions. 

The remainder of the paper is organized as follows. \Cref{sec: formulation} formulates the traffic model as an event-driven stochastic process, and illustrates how vehicle behaviors are encoded in the traffic model. \Cref{sec: policy} illustrates the effectiveness of the model through case studies. \Cref{sec: discussion} discusses applications of the method and concludes the paper.

\section{Traffic Model\label{sec: formulation}}
This section introduces the traffic model. The traffic delay at an intersection is modeled as an event-driven stochastic process. An event is defined as the introduction of a new vehicle. This section then describes how the microscopic vehicle interactions affect the macroscopic dynamics.

\subsection{Traffic Model at Intersections}
Consider an intersection with $K$ incoming lanes. A conflict is identified if two incoming lanes intersect. These relationships can be described in a conflict graph $\mathcal{G}$ with the nodes being the incoming lanes and the links representing conflicts. For example, \cref{fig: intersection a} shows one road configuration with four incoming lanes, and \cref{fig: intersection b} shows the conflict graph. $T^k$ is the delay at lane $k$, which will be introduced in \eqref{eq: defn T}. 

When there are conflicts, vehicles from the corresponding lanes cannot occupy the intersection at the same time. Let $t_i^*$ be the desired time for vehicle $i$ to pass the center of the intersection. The vehicles are numbered such that $t_{i}^*<t_{i+1}^*$. \Cref{fig: time a} shows the desired time of occupancy (centered at $t_i^*$) for vehicles coming in the four lanes in \cref{fig: intersection a}. According to the graph in \cref{fig: intersection b}, the scenario in \cref{fig: time a} is infeasible as vehicles 1 through 4 cannot occupy the intersection at the same time. Based on the FIFO policy, vehicles 2 and 3 yield to vehicle 1. Vehicles 4 and 5 yield to vehicles 2 and 3, and so on. Let $\bar t_i$ be the actual time for vehicle $i$ to pass the center of the intersection. \Cref{fig: time b} shows the actual time of occupancy when all vehicles adopt FIFO. The actual time of occupancy may change when the policy changes, resulting in different traffic delay. For example, \cref{fig: time c} corresponds to another policy that will be introduced in \Cref{sec: policy}. 

The traffic at the intersection is modeled as an event-driven stochastic system with the state being the traffic delay and the input being the incoming traffic. It is assumed that the desired passing time for incoming vehicles from lane $k$ follows a Poisson distribution with parameter $\lambda_k$. The traffic flows from different lanes are independent. Since the combination of multiple Poisson processes is a Poisson distribution \cite{gardiner2009stochastic}, the incoming traffic from all lanes can be described as one Poisson process $\{t_1^*,t_2^*,\ldots\}$ with parameter $\lambda = \sum_k \lambda_k$. The input to the model is chosen to be the random arrival interval between vehicle $i+1$ and $i$, i.e., $x_i = t_{i+1}^*-t_i^*$, and the lane number $s_{i+1}$ for vehicle $i+1$. For all $i$, the probability density for $x_i = x$ is $p_x(x) = \lambda e^{-\lambda x}$. The probability of $s_{i+1} = k$ is $P_s(k) = \frac{\lambda_k}{\lambda}$. The delay for lane $k$ considering $i$ vehicles is denoted $T^k_i$, which captures the difference between the actual passing time and the traffic-free passing time of those vehicles, i.e.,
\begin{equation}
T^k_i = \max_{s_{j}=k, j\leq i} \bar t_j^{(i)} - t_{i}^*\text{,}\label{eq: defn T}
\end{equation}
where $\bar t_j^{(i)}$ denotes the actual passing time for vehicle $j$ only considering the interactions among the first $i$ vehicles, e.g., vehicle $i+1$ has not approached the intersection yet. Here, $\bar t_j^{(i)}$ corresponds to an equilibrium in microscopic vehicle interactions which will be introduced in \eqref{eq: i eq}. It may differ from $\bar t_j^{(k)}$ for $k\neq i$. 

Define $\mathbf{T}_i:=[T^1_i, \ldots, T^K_i]^T$. The dynamics of the traffic delay at the intersection is determined by
\begin{equation}
\mathbf{T}_{i+1} = \mathcal{F}(\mathbf{T}_i, x_i, s_{i+1})\text{,}\label{eq: dynamic}
\end{equation}
where the function $\mathcal{F}$ depends on the policies adopted by the vehicles and the road topology defined by the conflict graph $\mathcal{G}$. 
Given \eqref{eq: dynamic}, the conditional probability density of $\mathbf{T}_{i+1}$ given $\mathbf{T}_i$, $x_i$ and $s_{i+1}$ is
\begin{eqnarray}
p_{\mathbf{T}_{i+1}}(\mathbf{t}\mid\mathbf{T}_i, x_i, s_{i+1}) = \delta(\mathbf{t}= \mathcal{F}(\mathbf{T}_i, x_i, s_{i+1}))\text{,}
\end{eqnarray}
where $\delta(\cdot)$ is the delta function. The probability density is
\begin{eqnarray}
&&p_{\mathbf{T}_{i+1}}(\mathbf{t}) \nonumber\\
&=& \sum_k P_{s}(k) \int_{x} \int_{\vec{\tau}}p_{\mathbf{T}_{i+1}}(\mathbf{t}\mid\vec\tau, x, k)p_{\mathbf{T}_i}(\vec{\tau})d\vec{\tau} p_x(x)dx\nonumber\\
&=& \sum_k P_{s}(k) \int_{\mathcal{F}(\vec\tau, x, k)=\mathbf{t}} \delta(0)p_{\mathbf{T}_i}(\vec\tau)p_{x}(x)d\vec\tau dx\label{eq: probability}\text{,}
\end{eqnarray}
which involves integration over a manifold. The cumulative probability of $\mathbf{T}_i$ is denoted as $P_{\mathbf{T}_i}(\mathbf{t})= \int_{-\infty}^{(t^1)^+}\ldots\int_{-\infty}^{(t^k)^+}p_{\mathbf{T}_i}(\tau^1,\ldots,\tau^k) d\tau^1\ldots d\tau^k$ where $\mathbf{t} = [t^1, \ldots, t^k]$. 
The problems of interest are:
\begin{itemize}
\item Does the sequence $\{p_{\mathbf{T}_{i}}\}_i$ converge in $L_1$-norm?

Divergence corresponds to the formation of congestion, i.e., the case that the expected delay keeps growing.
\item If converged, what is the steady state distribution of $p_{\mathbf{T}}:=\lim_{i\rightarrow \infty}p_{\mathbf{T}_{i}}$?

From the steady state distribution, we may compute the expected delay.
\end{itemize}

These two problems will be considered in the case studies in \Cref{sec: policy}. Moreover, from the distribution of the lane delays, we can compute the scalar delay introduced by the $(i+1)$th vehicle as
\begin{equation}
d_{i+1} = \sum_{j\leq i} \left(\bar t_j^{(i+1)} - \bar t_j^{(i)}\right)+\bar t_{i+1}^{(i+1)} - t_{i+1}^{*}\text{.}\label{eq: delay}
\end{equation}
In the case that the introduction of a new vehicle only affects the last vehicle in other lanes (which is usually the case),
\begin{equation}
d_{i+1} = T_{i+1}^{s_{i+1}} + \sum_{k\neq s_{i+1}} (T_{i+1}^k - T_{i}^k + x_i)\text{.} \label{eq: delay macro} 
\end{equation}

\subsection{Microscopic Interactions\label{sec: vehicle}}
It is assumed that the vehicles at intersections have fixed paths. When interacting with other vehicles, they only change their speed profiles to adjust the time to pass the intersection. Such simplification is widely adopted \cite{altche2016time, qian2017autonomous}. In this paper, we further reduce the high dimensional speed profile for vehicle $i$ to a single state $t_i$ which denotes the time for vehicle $i$ to pass the center of the intersection. Since the mapping from $t_i$ to a speed profile is surjective, interactions can be analyzed using $t_i$'s.

The policy of vehicle $i$ is denoted
\begin{equation}
t_i(k) = f(t_i^*,t_{-i}(k-1))\text{,}
\end{equation}
where $k$ denotes time step. The subscript $-i$ denotes all other indices except $i$. The one step delay is due to reaction time. An equilibrium is achieved if the vehicles do not have incentives to adjust the passing time. Such equilibrium may be broken with a new vehicle. It is assumed that the time for the vehicles to achieve a new equilibrium is negligible. The assumption is true when the flow rate is low. Every event then leads to one equilibrium. The actual passing time $\bar t_j^{(i)}$ when $i$ vehicles are considered lies at the $i$th equilibrium such that 
\begin{equation}
\bar t_j^{(i)} = f(t_j^*,\bar t_{-j}^{(i)}),\forall j\leq i\text{.}\label{eq: i eq}
\end{equation}

The average delay of the vehicles satisfies
\begin{equation}
\bar d = \lim_{N\rightarrow\infty} \frac{1}{N}\sum_{i} (\bar t_i^{(N)}-t_i^*) = \lim_{N\rightarrow\infty} \frac{1}{N}\sum_{i}d_{i}\text{,}
\end{equation}
where the second equality is due to \eqref{eq: delay}. According to the central limit theorem, the system is ergodic such that the average delay of all vehicles equals the expected delay introduced by any event in the steady state,
\begin{equation}
E(\bar d) = \lim_{i\rightarrow\infty} E(d_{i})\text{.}\label{eq: ergodicity}
\end{equation}

\section{Case Studies\label{sec: policy}}

To illustrate the effectiveness of the model, this section derives traffic properties under two frequently used policies by analysis, event-driven simulation (EDS) as well as conventional time-driven traffic simulation. The two policies are first-in-first-out (FIFO) policy~\cite{dresner2004multiagent} and flexible order (FO) policy~\cite{ahmane2013modeling}. For simplicity, we only consider a two-lane intersection (which is equivalent to lane merging). More detailed analyses are discussed in the extended version~\cite{liu2018analyzing}.

A policy specifies 1) the passing order, and 2) the temporal gap between two consecutive vehicles. The temporal gap refers to the time distance or headway maintained between vehicles. Denote $\Delta_d$ and $\Delta_s$ to be the temporal gap between vehicles from different lanes and the temporal gap between vehicles from the same lane respectively. The gap may be affected by vehicle speed, uncertainties in perception, and etc. When the traffic flow rate is low, we assume $\Delta_s=0$. 

\begin{figure}[t]
\begin{center}
\subfloat[Domain]{
\includegraphics[width=4.3cm]{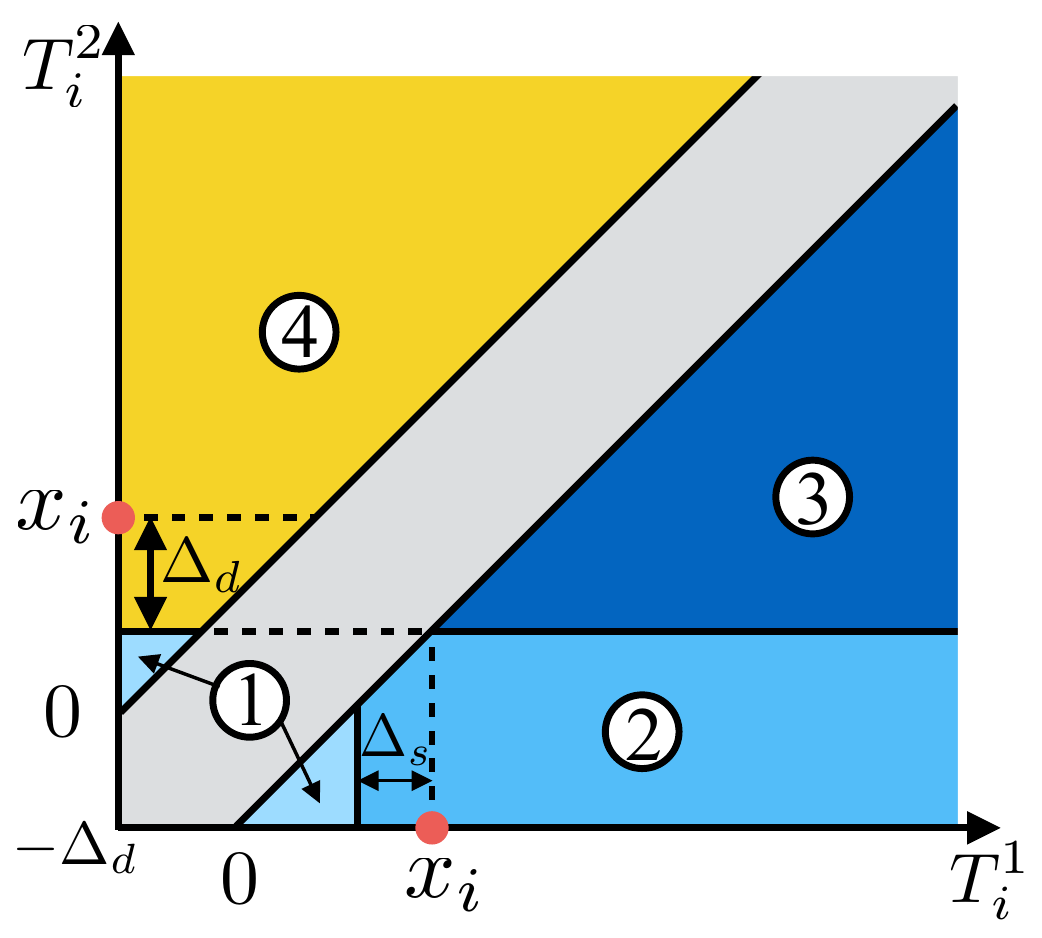}}
\subfloat[Value]{
\includegraphics[width=4.3cm]{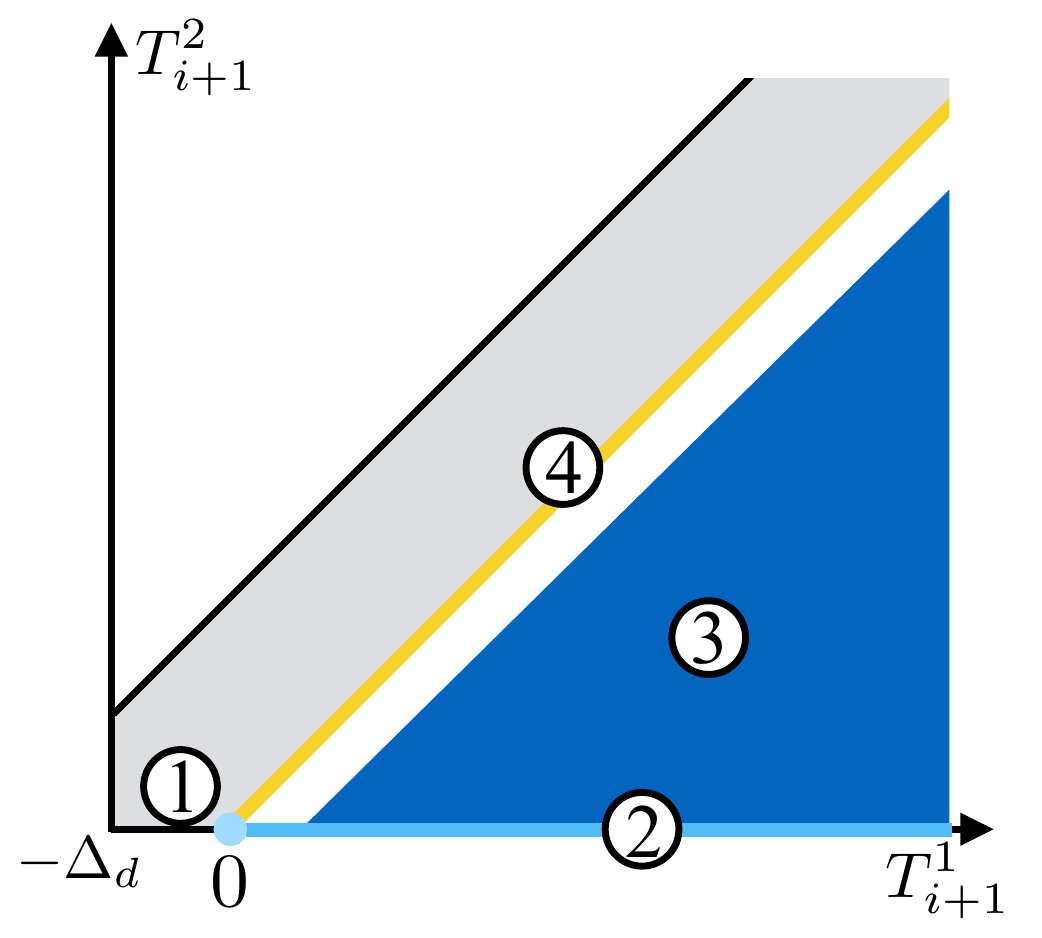}}
\caption{Illustration of the mapping \eqref{eq: dynamic} under FIFO for $s_{i+1} = 1$.}
\label{fig: FIFO mapping}
\end{center}
\vspace{-10pt}
\end{figure}

\subsection{Case 1: Lane Merging with FIFO}
Under FIFO, the passing order is determined by the desired arrival times $\{t_i^*\}_i$ such that the actual passing time for vehicle $i$ should be after the actual passing times for all conflicting vehicles $j$ such that $j<i$.
As the passing order is fixed, the actual passing time will not be affected by later vehicles, i.e., $\bar t_j^{(i)} = \bar t_j^{(j)}$ for all $j<i$. For vehicle $i$,
\begin{equation}
\bar t_i^{(i)} := \max \{t_i^*, \mathcal{D}_i, \mathcal{S}_i\}\text{,}\label{eq: fifo micro}
\end{equation}
where
\begin{eqnarray}
\mathcal{D}_i &=& \max_{j}(\bar t_j^{(i)}+\Delta_d)\text{ s.t. } j<i, (s_{j},s_i)\in \mathcal{G}\text{,}\\
\mathcal{S}_i &=& \max_{j}(\bar t_j^{(i)}+\Delta_s)\text{ s.t. } j<i, s_{j}=s_i\text{.}
\end{eqnarray}

The effect of FIFO is illustrated in \cref{fig: time b}. Following from \eqref{eq: defn T} and \eqref{eq: fifo micro}, the equation \eqref{eq: dynamic} for FIFO can be computed, which is listed in \Cref{table: FIFO mapping} and shown in \cref{fig: FIFO mapping}. Only the case for $s_{i+1}=1$ is shown. The case for $s_{i+1}=2$ can be obtained by switching superscripts $1$ and $2$. To bound the domain from below, let $T_i^{j} = \max\{T_i^{j},-\Delta_d\}$ for all $i$ and $j \in\{ 1,2\}$. The mapping is piece-wise smooth with four smooth components. Region~1 corresponds to where there is a sufficient gap in both lanes for the $(i+1)$th vehicle to pass without delay. Regions 2 and 3 correspond to where the last vehicle is from the ego lane and it causes delay for the $(i+1)$th vehicle. Region 4 corresponds to where the last vehicle is from the other lane and delays the $(i+1)$th vehicle.

\begin{table}[t]
\vspace{5pt}
\caption{The mapping \eqref{eq: dynamic} under FIFO for $s_{i+1} = 1$.}
\vspace{-5pt}
\begin{center}
\begin{tabular}{ccc}
\toprule
Region & Domain & Value\\
\midrule
1 & $\begin{array}{c}T_i^1<x_i-\Delta_s \\ T_i^2<x_i-\Delta_d\end{array}$ & $\begin{array}{cc} T_{i+1}^1 = 0 \\ T_{i+1}^2 = -\Delta_d\end{array}$\\
\midrule
2 & $\begin{array}{c}T_i^1\geq x_i-\Delta_s \\ T_i^2<x_i-\Delta_d \\ T_i^2<T_i^1\end{array}$ & $\begin{array}{cc} T_{i+1}^1 = T_i^1+\Delta_s-x_i \\ T_{i+1}^2 = -\Delta_d\end{array}$\\
\midrule
3 & $\begin{array}{c}T_i^2\geq x_i-\Delta_d \\ T_i^2<T_i^1\end{array}$ & $\begin{array}{cc} T_{i+1}^1 = T_i^1+\Delta_s-x_i \\ T_{i+1}^2 = T_i^2-x_i\end{array}$\\
\midrule
4 & $\begin{array}{c}T_i^2\geq x_i-\Delta_d \\ T_i^2>T_i^1\end{array}$ & $\begin{array}{cc} T_{i+1}^1 = T_i^2+\Delta_d-x_i \\ T_{i+1}^2 = T_i^2-x_i\end{array}$\\
\bottomrule
\end{tabular}
\end{center}
\label{table: FIFO mapping}
\vspace{-10pt}
\end{table}%

Given the dynamic equation, the distribution of traffic delay in \eqref{eq: probability} can be computed. The propagation of $p_{\mathbf{T}_i}$ for $\lambda_1=\SI{0.1}{\per\second}$, $\lambda_2=\SI{0.5}{\per\second}$, $\Delta_d=\SI{2}{\second}$, $\Delta_s=\SI{1}{\second}$ is shown in \cref{fig: fifo multiple} by an event-driven simulation of \eqref{eq: dynamic} with $10,000$ particles. At iteration~1, $P_s(1)$ percent of particles are at $(0,-\Delta_d)$, while the others are at $(-\Delta_d,0)$. Every particle corresponds to a traffic scenario. For conventional time-driven traffic simulation, it is computationally expensive to obtain distributions with $10,000$ traffic scenarios. However, with the event-driven simulation under the proposed model, the distributions can be obtained in real time. The distribution approached steady state at iteration 8 with a unique pattern. Theoretical analysis \cite{liu2018analyzing} also verifies this pattern.

\begin{figure*}[t]
\vspace{3pt}
\subfloat[Case 1: Lane Merging with FIFO. Iterations 2, 3, 4, 8.\label{fig: fifo multiple}]{\centering
%
%
\begin{tikzpicture}

\begin{axis}[%
width=3.3cm,
height=3.3cm,
font=\footnotesize,
at={(1.648in,0.642in)},
scale only axis,
axis on top,
xmin=0.5,
xmax=106,
xtick={1,22,43,64,85,106},
xticklabels={{$-2$},{$0$},{$2$},{$4$},{$6$},{$8$}},
xlabel={Delay in lane 1 [\si{\second}]},
xlabel style={at={(0.5,-0.08)}},
ymin=0.5,
ymax=106,
ylabel={Delay in lane 2 [\si{\second}]},
ylabel style={at={(-0.08, 0.5)}},
ytick={1,22,43,64,85,106},
yticklabels={{$-2$},{$0$},{$2$},{$4$},{$6$},{$8$}},
axis background/.style={fill=white}
]
\addplot [forget plot] graphics [xmin=0.5, xmax=106.5, ymin=0.5, ymax=106.5] {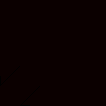};
\end{axis}
\end{tikzpicture}
%
%
\begin{tikzpicture}

\begin{axis}[%
width=3.3cm,
height=3.3cm,
font=\footnotesize,
at={(1.648in,0.642in)},
scale only axis,
axis on top,
xmin=0.5,
xmax=106,
xtick={1,22,43,64,85,106},
xticklabels={{$-2$},{$0$},{$2$},{$4$},{$6$},{$8$}},
xlabel={Delay in lane 1 [\si{\second}]},
xlabel style={at={(0.5,-0.08)}},
ylabel={Delay in lane 2 [\si{\second}]},
ylabel style={at={(-0.08, 0.5)}},
ymin=0.5,
ymax=106,
ytick={1,22,43,64,85,106},
yticklabels={{$-2$},{$0$},{$2$},{$4$},{$6$},{$8$}},
axis background/.style={fill=white}
]
\addplot [forget plot] graphics [xmin=0.5, xmax=106.5, ymin=0.5, ymax=106.5] {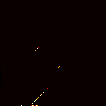};
\end{axis}
\end{tikzpicture}
%
%
\begin{tikzpicture}

\begin{axis}[%
width=3.3cm,
height=3.3cm,
font=\footnotesize,
at={(1.648in,0.642in)},
scale only axis,
axis on top,
xmin=0.5,
xmax=106,
xtick={1,22,43,64,85,106},
xticklabels={{$-2$},{$0$},{$2$},{$4$},{$6$},{$8$}},
xlabel={Delay in lane 1 [\si{\second}]},
xlabel style={at={(0.5,-0.08)}},
ylabel={Delay in lane 2 [\si{\second}]},
ylabel style={at={(-0.08, 0.5)}},
ymin=0.5,
ymax=106,
ytick={1,22,43,64,85,106},
yticklabels={{$-2$},{$0$},{$2$},{$4$},{$6$},{$8$}},
axis background/.style={fill=white}
]
\addplot [forget plot] graphics [xmin=0.5, xmax=106.5, ymin=0.5, ymax=106.5] {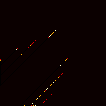};
\end{axis}
\end{tikzpicture}
%
%
\begin{tikzpicture}

\begin{axis}[%
width=3.3cm,
height=3.3cm,
font=\footnotesize,
at={(1.648in,0.642in)},
scale only axis,
axis on top,
xmin=0.5,
xmax=106,
xtick={1,22.6,44.2,65.8,87.4,109},
xticklabels={{$-2$},{$0$},{$2$},{$4$},{$6$},{$8$}},
xlabel={Delay in lane 1 [\si{\second}]},
xlabel style={at={(0.5,-0.08)}},
ylabel={Delay in lane 2 [\si{\second}]},
ylabel style={at={(-0.08, 0.5)}},
ymin=0.5,
ymax=106,
ytick={1,22,43,64,85,106},
yticklabels={{$-2$},{$0$},{$2$},{$4$},{$6$},{$8$}},
axis background/.style={fill=white},
colormap/hot2,
colorbar,
colorbar style={
	at = {(1.02, 1)},
        width=8pt,
        font = \scriptsize,
        ytick={0, 0.5, 1},
	yticklabels={0, 0.1, 0.2}
    }
]
\addplot [forget plot] graphics [xmin=0.5, xmax=109.5, ymin=0.5, ymax=106.5] {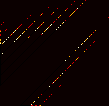};
\end{axis}
\end{tikzpicture}%
}\\
\subfloat[Case 2: Lane Merging with FO. Iterations 2, 3, 4, 8.\label{fig: fo multiple}]{\centering
%
%
\begin{tikzpicture}

\begin{axis}[%
width=3.3cm,
height=3.3cm,
font=\footnotesize,
at={(3.381in,0.801in)},
scale only axis,
axis on top,
xmin=0.5,
xmax=106,
xtick={1,22,43,64,85,106},
xticklabels={{$-2$},{$0$},{$2$},{$4$},{$6$},{$8$}},
xlabel={Delay in lane 1 [\si{\second}]},
xlabel style={at={(0.5,-0.08)}},
ylabel={Delay in lane 2 [\si{\second}]},
ylabel style={at={(-0.08, 0.5)}},
ymin=0.5,
ymax=106,
ytick={1,22,43,64,85,106},
yticklabels={{$-2$},{$0$},{$2$},{$4$},{$6$},{$8$}},
axis background/.style={fill=white}
]
\addplot [forget plot] graphics [xmin=0.5, xmax=106.5, ymin=0.5, ymax=106.5] {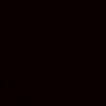};
\end{axis}
\end{tikzpicture}
%
%
\begin{tikzpicture}

\begin{axis}[%
width=3.3cm,
height=3.3cm,
font=\footnotesize,
at={(3.381in,0.801in)},
scale only axis,
axis on top,
xmin=0.5,
xmax=106,
xtick={1,22,43,64,85,106},
xticklabels={{$-2$},{$0$},{$2$},{$4$},{$6$},{$8$}},
xlabel={Delay in lane 1 [\si{\second}]},
xlabel style={at={(0.5,-0.08)}},
ylabel={Delay in lane 2 [\si{\second}]},
ylabel style={at={(-0.08, 0.5)}},
ymin=0.5,
ymax=106,
ytick={1,22,43,64,85,106},
yticklabels={{$-2$},{$0$},{$2$},{$4$},{$6$},{$8$}},
axis background/.style={fill=white}
]
\addplot [forget plot] graphics [xmin=0.5, xmax=106.5, ymin=0.5, ymax=106.5] {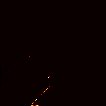};
\end{axis}
\end{tikzpicture}
%
%
\begin{tikzpicture}

\begin{axis}[%
width=3.3cm,
height=3.3cm,
font=\footnotesize,
at={(3.381in,0.801in)},
scale only axis,
axis on top,
xmin=0.5,
xmax=106,
xtick={1,22,43,64,85,106},
xticklabels={{$-2$},{$0$},{$2$},{$4$},{$6$},{$8$}},
xlabel={Delay in lane 1 [\si{\second}]},
xlabel style={at={(0.5,-0.08)}},
ylabel={Delay in lane 2 [\si{\second}]},
ylabel style={at={(-0.08, 0.5)}},
ymin=0.5,
ymax=106,
ytick={1,22,43,64,85,106},
yticklabels={{$-2$},{$0$},{$2$},{$4$},{$6$},{$8$}},
axis background/.style={fill=white}
]
\addplot [forget plot] graphics [xmin=0.5, xmax=106.5, ymin=0.5, ymax=106.5] {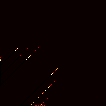};
\end{axis}
\end{tikzpicture}
%
%
\begin{tikzpicture}

\begin{axis}[%
width=3.3cm,
height=3.3cm,
font=\footnotesize,
at={(3.381in,0.801in)},
scale only axis,
axis on top,
xmin=0.5,
xmax=106,
xtick={1,22,43,64,85,106},
xticklabels={{$-2$},{$0$},{$2$},{$4$},{$6$},{$8$}},
xlabel={Delay in lane 1 [\si{\second}]},
xlabel style={at={(0.5,-0.08)}},
ylabel={Delay in lane 2 [\si{\second}]},
ylabel style={at={(-0.08, 0.5)}},
ymin=0.5,
ymax=106,
ytick={1,22,43,64,85,106},
yticklabels={{$-2$},{$0$},{$2$},{$4$},{$6$},{$8$}},
axis background/.style={fill=white},
colormap/hot2,
colorbar,
colorbar style={
	at = {(1.02, 1)},
        width=8pt,
        font = \scriptsize,
        ytick={0, 0.5, 1},
	yticklabels={0, 0.1, 0.2}
    }
]
\addplot [forget plot] graphics [xmin=0.5, xmax=106.5, ymin=0.5, ymax=106.5] {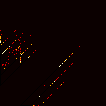};
\end{axis}
\end{tikzpicture}%
}
\caption{Event-driven simulation with $p_{\mathbf{T}_i}$ for $\lambda_1=\SI{0.1}{\per\second}$, $\lambda_2=\SI{0.5}{\per\second}$, $\Delta_d=\SI{2}{\second}$, and $\Delta_s=\SI{1}{\second}$ with $10000$ particles.}
\vspace{-10pt}
\end{figure*}

\subsection{Case 2: Lane Merging with FO}
FO allows high priority vehicles to yield to low priority vehicles if low priority vehicles can arrive earlier. The passing order may change over time. At step $i$, let $\bar t_i^{(i-1)} := \max \{t_i^*, \max_{j<i, s_j=s_i}(\bar t_j^{(i-1)}+\Delta_s)\}$ be the earliest desired time for vehicle $i$ to pass considering its front vehicles in the same lane. Sort the list $\{\bar t_1^{(i-1)},\ldots,\bar t_{i-1}^{(i-1)},\bar t_i^{(i-1)}\}$ in ascending order and record the ranking in an injection $Q$. Ties are broken by index. For the first vehicle in $Q$, i.e., vehicle $k = Q^{-1}(1)$, $\bar t_k^{(i)} : = \bar t_k^{(i-1)}$. By induction, assuming that $\bar t_j^{(i)}$ for $Q(j)<Q(k)$ has been computed, then
\begin{equation}
\bar t_k^{(i)} : = \max \{\bar t_k^{(i-1)}, \mathcal{D}_k^i,\mathcal{S}_k^i\}\text{,}\label{eq: fo micro}
\end{equation}
where 
\small\begin{eqnarray}
\mathcal{D}_k^i &=& \max_{j}(\bar t_j^{(i)}+\Delta_d)\text{ s.t. } Q(j)<Q(k), (s_{j},s_k)\in \mathcal{G}\text{,}\\
\mathcal{S}_k^i &=& \max_{j}(\bar t_j^{(i)}+\Delta_s)\text{ s.t. } Q(j)<Q(k), s_{j}=s_k\text{.}
\end{eqnarray}\normalsize

Under FO, the actual passing time may change at every step. There is a distributed algorithm \cite{liu2017distributed} for this policy where the vehicles do not need to compute the global passing order. \Cref{fig: time c} shows the effect of FO. Vehicles in the same direction tend to form groups and pass together.

Following from \eqref{eq: defn T} and \eqref{eq: fo micro}, the dynamic equation \eqref{eq: dynamic} for FO can be computed, which is listed in \Cref{table: FO mapping} and illustrated in \cref{fig: FO mapping} for $s_{i+1}=1$. There are eight smooth components in the mapping. Regions 1 to 4 are the same as in the FIFO case such that vehicle $i+1$ passes the intersection after all other vehicles. Regions 5 to 8 correspond to where vehicle $i+1$ passes the intersection before the last vehicle in the other lane. In regions 5 and 7, vehicle $i+1$ does not experience delay due to sufficient gap in the ego lane. The last vehicle in the other lane is delayed in region 5, and not delayed in region 7. Regions 6 and 8 correspond to where the $(i+1)$th vehicle is delayed by the last vehicle in the ego lane but can still go before the last vehicle in the other lane. Delay is caused in the other lane in region 6.

Given the dynamic equation, the distribution of delay in \eqref{eq: probability} can be computed. \Cref{fig: fo multiple} shows the event-driven simulation with the same conditions as the FIFO case. FO generates less delay than FIFO. However, the distribution under FO no longer has the ``zebra'' pattern shown in FIFO. We investigate the steady state distribution of delay for $\Delta_s = 0$ and leave the case of $\Delta_s > 0$ for future work. When $\Delta_s = 0$, the mapping $p_{\mathbf{T}_i}\mapsto p_{\mathbf{T}_{i+1}}$ is a contraction as shown in \cref{fig: FO convergence}. \Cref{prop: fo delta_s=0} provides a solution of $p_{\mathbf{T}}$ when $\Delta_s = 0$ and $\lambda_1=\lambda_2$. 
As the problem is symmetric, define $g(t) := p_{\mathbf{T}}(t, t-\Delta_d) = p_{\mathbf{T}}(t-\Delta_d, t)$. The function $g(t)$ represents half of the probability density that $t$ equals the maximum lane delay. Let the finite part of the function be $\tilde g(t)$ and the delta component be $\widehat g(t)$, which is nonzero only at $0$ and $\Delta_d$.

\begin{figure}[t]
\begin{center}
\subfloat[Domain]{
\includegraphics[width=4.3cm]{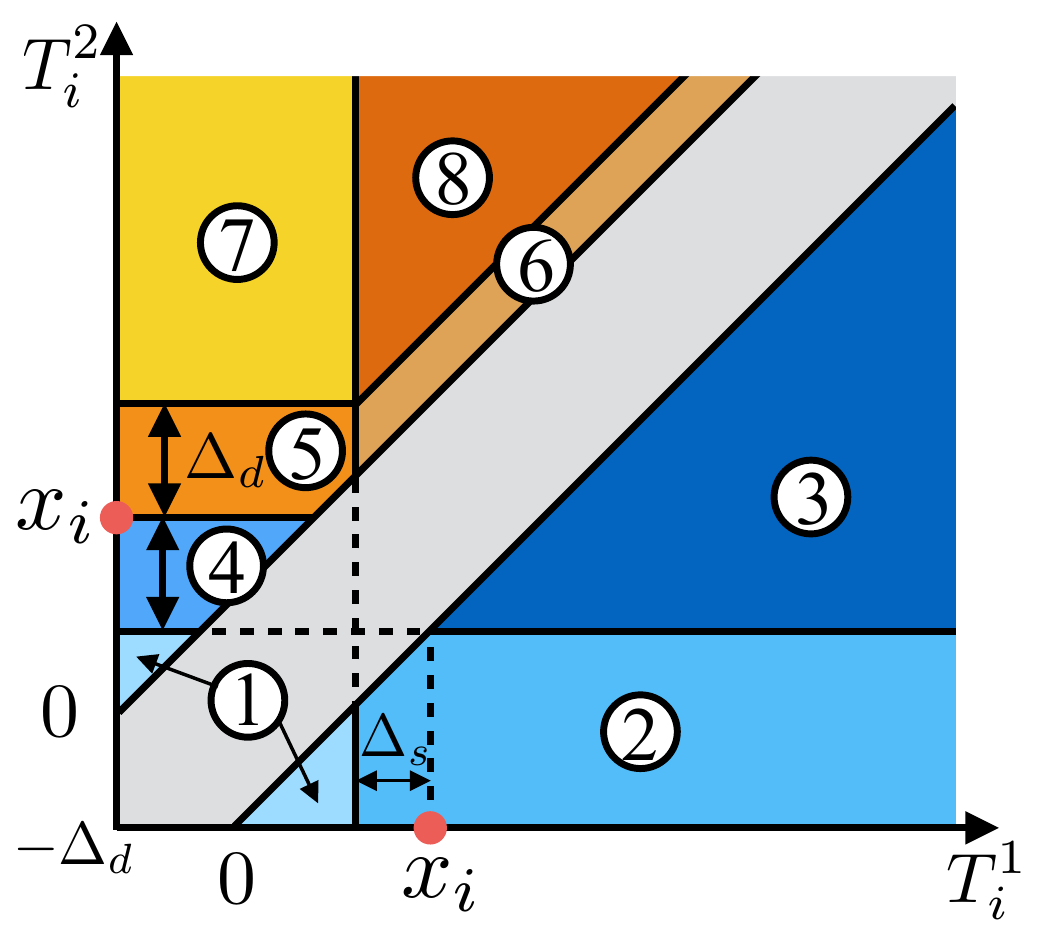}}
\subfloat[Value]{
\includegraphics[width=4.3cm]{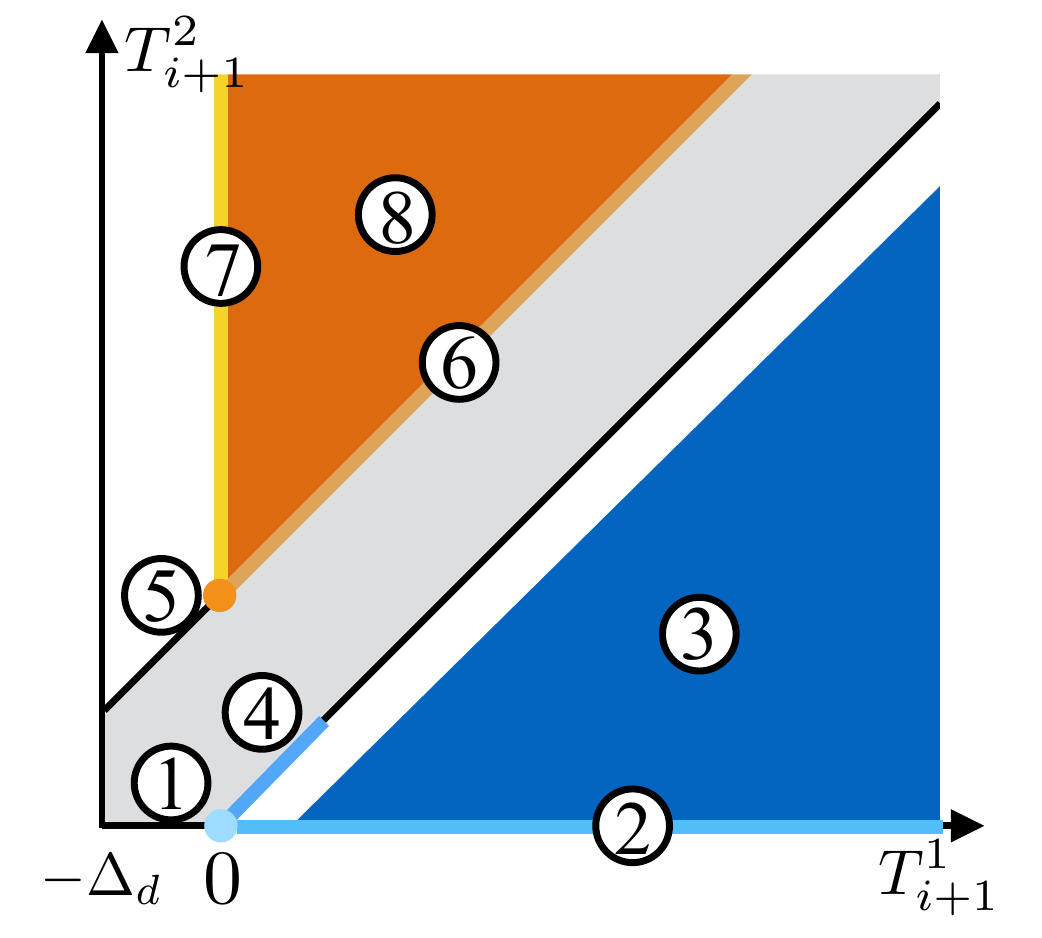}}
\caption{Illustration of the mapping \eqref{eq: dynamic} under FO for $s_{i+1} = 1$.}
\label{fig: FO mapping}
\end{center}
\vspace{-10pt}
\end{figure}

\begin{table}[t]
\vspace{5pt}
\caption{The mapping \eqref{eq: dynamic} under FO for $s_{i+1} = 1$.}
\vspace{-5pt}
\begin{center}
\begin{tabular}{ccc}
\toprule
 & Domain & Value\\
\midrule
1 & $\begin{array}{c}T_i^1<x_i-\Delta_s \\ T_i^2<x_i-\Delta_d\end{array}$ & $\begin{array}{cc} T_{i+1}^1 = 0 \\ T_{i+1}^2 = -\Delta_d\end{array}$\\
\midrule
2 & $\begin{array}{c}T_i^1\geq x_i-\Delta_s \\ T_i^2<x_i-\Delta_d \\ T_i^2<T_i^1\end{array}$ & $\begin{array}{cc} T_{i+1}^1 = T_i^1+\Delta_s-x_i \\ T_{i+1}^2 = -\Delta_d\end{array}$\\
\midrule
3 & $\begin{array}{c}T_i^2\geq x_i-\Delta_d \\ T_i^2<T_i^1\end{array}$ & $\begin{array}{cc} T_{i+1}^1 = T_i^1+\Delta_s-x_i \\ T_{i+1}^2 = T_i^2-x_i\end{array}$\\
\midrule
4 & $\begin{array}{c}T_i^2\in [x_i-\Delta_d, x_i) \\ T_i^2>T_i^1\end{array}$ & $\begin{array}{cc} T_{i+1}^1 = T_i^2+\Delta_d-x_i \\ T_{i+1}^2 = T_i^2-x_i\end{array}$\\
\midrule
5 & $\begin{array}{c}T_i^2\in [x_i, x_i+\Delta_d) \\ T_i^1<x_i-\Delta_s \end{array}$ & $\begin{array}{cc} T_{i+1}^1 = 0 \\ T_{i+1}^2 = \Delta_d \end{array}$\\
\midrule
6 & $\tiny\begin{array}{c}T_i^2-T_i^1\in [\Delta_d,\Delta_d+\Delta_s] \\ T_i^1\geq x_i-\Delta_s \end{array}$ & $\tiny\begin{array}{cc} T_{i+1}^1 = T_i^1-x_i+\Delta_s \\ T_{i+1}^2 = T_i^1-x_i+\Delta_s+\Delta_d \end{array}$\\
\midrule
7 & $\begin{array}{c}T_i^1<x_i-\Delta_s \\ T_i^2\geq x_i+\Delta_d \end{array}$ & $\begin{array}{cc} T_{i+1}^1 = 0 \\ T_{i+1}^2 = T_i^2-x_i \end{array}$\\
\midrule
8 & $\begin{array}{c}T_i^2-T_i^1 > x_i+\Delta_d+\Delta_s \\ T_i^1\geq x_i-\Delta_s \end{array}$ & $\begin{array}{cc} T_{i+1}^1 = T_i^1-x_i+\Delta_s \\ T_{i+1}^2 = T_i^2-x_i \end{array}$\\
\bottomrule
\end{tabular}
\end{center}
\label{table: FO mapping}
\vspace{-10pt}
\end{table}%

\begin{figure*}[t]
\begin{center}
\subfloat[Iteration 1.]{
%
%
\begin{tikzpicture}

\begin{axis}[%
width=3.3cm,
height=3.3cm,
font=\footnotesize,
at={(1.648in,0.642in)},
scale only axis,
axis on top,
xmin=1,
xmax=106,
xtick={1,22,43,64,85,106},
xticklabels={{$-2$},{$0$},{$2$},{$4$},{$6$},{$8$}},
xlabel={Delay in lane 1 [\si{\second}]},
xlabel style={at={(0.5,-0.08)}},
ylabel={Delay in lane 2 [\si{\second}]},
ylabel style={at={(-0.08, 0.5)}},
ymin=1,
ymax=106,
ytick={1,22,43,64,85,106},
yticklabels={{$-2$},{$0$},{$2$},{$4$},{$6$},{$8$}},
axis background/.style={fill=white}
]
\addplot [forget plot] graphics [xmin=0.5, xmax=179.5, ymin=0.5, ymax=127.5] {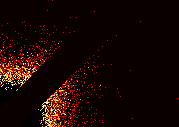};
\end{axis}
\end{tikzpicture}
\subfloat[Iteration 3.]{
%
%
\begin{tikzpicture}

\begin{axis}[%
width=3.3cm,
height=3.3cm,
font =\footnotesize,
at={(1.648in,0.642in)},
scale only axis,
axis on top,
xmin=1,
xmax=106,
xtick={1,22,43,64,85,106},
xticklabels={{$-2$},{$0$},{$2$},{$4$},{$6$},{$8$}},
xlabel={Delay in lane 1 [\si{\second}]},
xlabel style={at={(0.5,-0.08)}},
ylabel={Delay in lane 2 [\si{\second}]},
ylabel style={at={(-0.08, 0.5)}},
ymin=1,
ymax=106,
ytick={1,22,43,64,85,106},
yticklabels={{$-2$},{$0$},{$2$},{$4$},{$6$},{$8$}},
axis background/.style={fill=white}
]
\addplot [forget plot] graphics [xmin=0.5, xmax=162.5, ymin=0.5, ymax=107.5] {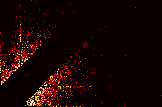};
\end{axis}
\end{tikzpicture}
\subfloat[Iteration 10.]{
%
%
\begin{tikzpicture}

\begin{axis}[%
width=3.3cm,
height=3.3cm,
font =\footnotesize,
at={(1.648in,0.642in)},
scale only axis,
axis on top,
xmin=1,
xmax=106,
xtick={1,22,43,64,85,106},
xticklabels={{$-2$},{$0$},{$2$},{$4$},{$6$},{$8$}},
xlabel={Delay in lane 1 [\si{\second}]},
xlabel style={at={(0.5,-0.08)}},
ylabel={Delay in lane 2 [\si{\second}]},
ylabel style={at={(-0.08, 0.5)}},
ymin=1,
ymax=106,
ytick={1,22,43,64,85,106},
yticklabels={{$-2$},{$0$},{$2$},{$4$},{$6$},{$8$}},
axis background/.style={fill=white}
]
\addplot [forget plot] graphics [xmin=0.5, xmax=130.5, ymin=0.5, ymax=106.5] {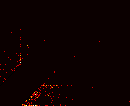};
\end{axis}
\end{tikzpicture}
\subfloat[Iteration 20.]{
%
%
\begin{tikzpicture}

\begin{axis}[%
width=3.3cm,
height=3.3cm,
font =\footnotesize,
at={(1.648in,0.642in)},
scale only axis,
axis on top,
xmin=1,
xmax=106,
xtick={1,22,43,64,85,106},
xticklabels={{$-2$},{$0$},{$2$},{$4$},{$6$},{$8$}},
xlabel={Delay in lane 1 [\si{\second}]},
xlabel style={at={(0.5,-0.08)}},
ylabel={Delay in lane 2 [\si{\second}]},
ylabel style={at={(-0.08, 0.5)}},
ymin=1,
ymax=106,
ytick={1,22,43,64,85,106},
yticklabels={{$-2$},{$0$},{$2$},{$4$},{$6$},{$8$}},
axis background/.style={fill=white},
colormap/hot2,
colorbar,
colorbar style={
	at = {(1.02, 1)},
        width=8pt,
        font = \scriptsize,
        ytick={0, 0.5, 1},
	yticklabels={0, 0.1, 0.2}
    }
]
\addplot [forget plot] graphics [xmin=0.5, xmax=106.5, ymin=0.5, ymax=106.5] {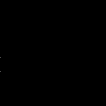};
\end{axis}
\end{tikzpicture}
\caption{Illustration of the convergence of \eqref{eq: probability} under FO for $\lambda_1 = \SI{1.1}{\per\second}$, $\lambda_2=\SI{0.5}{\per\second}$, $\Delta_d = \SI{2}{\second}$, and $\Delta_s = \SI{0}{\second}$ with $10000$ particles.}
\label{fig: FO convergence}
\end{center}
\vspace{-10pt}
\end{figure*}

\begin{prop}[Steady State Distribution for $\Delta_s=0$ under FO]\label{prop: fo delta_s=0}
If $\Delta_s = 0$ and $\lambda_1=\lambda_2$, we have
\begin{equation}\label{eq: fo g}
\tilde g(t) = C e^{\frac{\lambda}{2} t},~~\widehat g(0) = \frac{2}{\lambda}C, ~~\widehat g(\Delta_d) = \frac{1}{2}-\frac{2e^{\frac{\lambda}{2}\Delta_d}}{\lambda}C\text{,}
\end{equation}
where $C = \frac{\lambda(1+e^{-\lambda\Delta_d})}{8\left[e^{\frac{\lambda}{2}\Delta_d} +e^{-\frac{\lambda}{2}\Delta_d} - 1 \right]}$.
\end{prop}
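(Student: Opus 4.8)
The plan is to exploit the contraction property of the update $p_{\mathbf T_i}\mapsto p_{\mathbf T_{i+1}}$ illustrated in \cref{fig: FO convergence}: since the steady state is unique, it suffices to exhibit one invariant distribution and match it to the claimed formulas. The first step is to show that for $\Delta_s=0$ the support of the iterates collapses onto the two segments $\mathcal M:=\{(t,t-\Delta_d):t\in[0,\Delta_d]\}\cup\{(t-\Delta_d,t):t\in[0,\Delta_d]\}$. Indeed $p_{\mathbf T_1}$ is concentrated on $\{(0,-\Delta_d),(-\Delta_d,0)\}\subset\mathcal M$, and a direct inspection of \cref{table: FO mapping} with $\Delta_s=0$ shows that whenever $\mathbf T_i\in\mathcal M$ only Regions 1, 3, 4 and 5 (together with their $s_{i+1}=2$ mirrors) can be active with positive probability — Regions 2, 7 and 8 are incompatible with $\mathbf T_i\in\mathcal M$, and Region 6 meets $\mathcal M$ only on a null set of $(\mathbf T_i,x_i)$ — while each of Regions 1, 3, 4, 5 returns a point of $\mathcal M$. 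In particular $\max_k T_{i+1}^k\le\Delta_d$, so the $\mathcal M$-supported distributions form an invariant subspace containing the iterates, hence also the steady state.

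Second, I reduce the update to a scalar one. On $\mathcal M$ a lane-symmetric distribution is described by $m:=\max_k T^k\in[0,\Delta_d]$ together with the identity of the leading lane; by symmetry both leading lanes carry the same ``half density'' $g(m)=\tilde g(m)+\widehat g(0)\,\delta(m)+\widehat g(\Delta_d)\,\delta(m-\Delta_d)$ as in the statement. Reading Regions 1, 3, 4, 5 off \cref{table: FO mapping} gives the transition: from a state with leading delay $m$ in lane $\ell$, a new vehicle in lane $\ell$ (probability $\tfrac12$) produces leading delay $\max\{0,m-x_i\}$ in lane $\ell$, while a new vehicle in the other lane $\ell'$ (probability $\tfrac12$) produces leading delay $\Delta_d$ in lane $\ell$ if $x_i\le m$, $m+\Delta_d-x_i$ in lane $\ell'$ if $m<x_i\le m+\Delta_d$, and $0$ in lane $\ell'$ if $x_i>m+\Delta_d$. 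Integrating these against $p_x(x)=\lambda e^{-\lambda x}$ (the relevant Jacobians are all $1$) yields a single linear integral equation for $\tilde g$ on $(0,\Delta_d)$ and two scalar balance equations at the atoms $m=0$ and $m=\Delta_d$, with the shorthand $A:=\int_{[0,\Delta_d]}e^{-\lambda m}\,g(dm)$ appearing throughout.

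Third, I solve this system. Differentiating the integral equation in $m$ turns it into $H'=-\tfrac{\lambda}{2}H$ for the auxiliary function $H$ chosen so that $\tilde g=\tfrac{\lambda}{2}e^{\lambda m}H$, which forces $\tilde g(t)=Ce^{\lambda t/2}$ and, after using the $m=0$ balance, gives $\widehat g(0)=\tfrac{2}{\lambda}C$; the $m=\Delta_d$ balance gives $\widehat g(\Delta_d)=\tfrac14-\tfrac12 A$ together with $\widehat g(0)=\tfrac12 A(1+e^{-\lambda\Delta_d})$; finally substituting the explicit forms back into the definition of $A$ closes the linear system and yields $C=\frac{\lambda(1+e^{-\lambda\Delta_d})}{8[e^{\lambda\Delta_d/2}+e^{-\lambda\Delta_d/2}-1]}$ and the stated expressions for $\widehat g(0)$ and $\widehat g(\Delta_d)$. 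One then checks that $\int_{[0,\Delta_d]}g(dm)=\tfrac12$ holds automatically, consistent with mass conservation of the update, and that $C,\widehat g(0),\widehat g(\Delta_d)\ge 0$.

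I expect the principal obstacle to be the first step — showing that the support is confined to $\mathcal M$, equivalently that $\max_k T^k\le\Delta_d$ is invariant — since it rests on a somewhat delicate piecewise case analysis verifying that the ``long-delay'' branches (Regions 6, 7, 8) of the eight-piece FO map are never entered from $\mathcal M$ except on a null set. Once the scalar reduction is in place, the remainder is a routine computation with exponentials and a small linear system.
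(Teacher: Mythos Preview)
Your proposal is correct and follows essentially the same route as the paper: derive an integral fixed-point equation for $g$ from the region map, differentiate to obtain $\tilde g'=\tfrac{\lambda}{2}\tilde g$, then pin down the atoms via the Region~1 and Region~5 balances using the auxiliary integral $A=\mathcal I=\int e^{-\lambda\tau}g(\tau)\,d\tau$. The one substantive addition is your explicit support-confinement step (showing the iterates stay on $\mathcal M$ and hence that $\max_k T^k\le\Delta_d$), which the paper tacitly assumes; your exclusion of Region~8 is consistent with this, whereas the paper lists Region~8 alongside 3 and 4 but it contributes nothing on $\mathcal M$ anyway.
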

\begin{proof}
For $t\in(0,\Delta_d)$, using the mapping in regions 3, 4, and 8, we get the following steady state relationship $\tilde g(t) = \frac{1}{2}\left[\int_{\Delta_d-t}^\infty g(t+x-\Delta_d)p_xdx + \int_{0}^\infty g(t+x)p_xdx\right]$. 
Multiply both sides by $e^{-\lambda t}$, and then differentiate with respect to $t$ to get
\begin{equation}
\tilde g' -\lambda \tilde g = -\frac{\lambda}{2} \tilde g\text{.}
\end{equation}
Hence, $\tilde g = C e^{\frac{\lambda}{2} t}$ for some constant $C$. Now we solve for the constant $C$.
Due to symmetry, $\int_0^{\Delta_d} g(t)dt = \frac{1}{2}$. Using the fact that $g = \tilde g+\widehat g$ and $\tilde g = C e^{\frac{\lambda}{2} t}$, we get
\begin{equation}
\widehat g(0) + \frac{2C(e^{\frac{\lambda}{2}\Delta_d}-1)}{\lambda} + \widehat g(\Delta_d) = \frac{1}{2}\text{.}\label{eq: fifo sum 1}
\end{equation}

Consider region 1. The point mass at $0$ is $\widehat g(0) = \frac{1}{2}\int_{\Delta_d}^\infty \int_0^{x-\Delta_d} g(\tau)d\tau p_xdx  + \frac{1}{2}\int_{0}^\infty \int_0^{x}g(\tau)d\tau p_xdx$. By changing the order of integration, we get $\widehat g(0) = \frac{1}{2}\int_{0}^\infty \int_{\tau+\Delta_d}^{\infty} p_xdx g(\tau)d\tau + \frac{1}{2}\int_{0}^\infty \int_\tau^{\infty}p_xdx g(\tau)d\tau = \frac{1}{2} \int_0^\infty e^{-\lambda(\tau+\Delta_d)}g(\tau)d\tau +\frac{1}{2}\int_0^\infty e^{-\lambda\tau}g(\tau)d\tau$. Hence,
\begin{equation}
\widehat g(0) = \frac{e^{-\lambda\Delta_d}+1}{2}\mathcal{I}\text{,}
\end{equation}\normalsize
where $\mathcal{I} = \int_0^\infty e^{-\lambda\tau}g(\tau)d\tau$. Plugging in the expression of $g(\tau)$, we have
\begin{equation}
\mathcal{I}=\widehat g(0)+e^{-\lambda\Delta_d}\widehat g(\Delta_d) + \frac{2C(1-e^{-\frac{\lambda \Delta_d}{2}})}{\lambda}\text{.}
\end{equation}

Consider region 5. The point mass at $\Delta_d$ is $\widehat g(\Delta_d) = \frac{1}{2}\int_0^{\infty}\int_x^{\Delta_d} g(\tau)d\tau p_xdx$. By changing the order of integration, $\widehat g(\Delta_d) = \frac{1}{2}\int_0^{\Delta_d}\int_0^{\tau} p_x dx g(\tau)d\tau = \frac{1}{2}\int_0^{\Delta_d}(1-e^{-\lambda\tau})g(\tau)d\tau$. Then,
\begin{equation}
\widehat g(\Delta_d)  = \frac{1}{2}\left[\frac{1}{2} - \mathcal{I}\right]\text{.}\label{eq: fifo point mass at delta_d}
\end{equation}
We combine \eqref{eq: fifo sum 1} to \eqref{eq: fifo point mass at delta_d} to verify \Cref{prop: fo delta_s=0}.
\end{proof}

According to \eqref{eq: fo g}, the probability of zero-delay ($2\widehat g(0)$) and the probability of $\Delta_d$-delay ($2\widehat g(\Delta_d)$) only depend on $\lambda\Delta_d$, i.e., the ratio between the temporal gap and the arrival interval. \Cref{fig: point delay} illustrates those relationships. When the ratio between the temporal gap and the arrival interval increases, the probability of zero-delay decreases while the probability of $\Delta_d$-delay increases. \cref{fig: point delay} also illustrates the result from EDS, which verifies \Cref{prop: fo delta_s=0}.

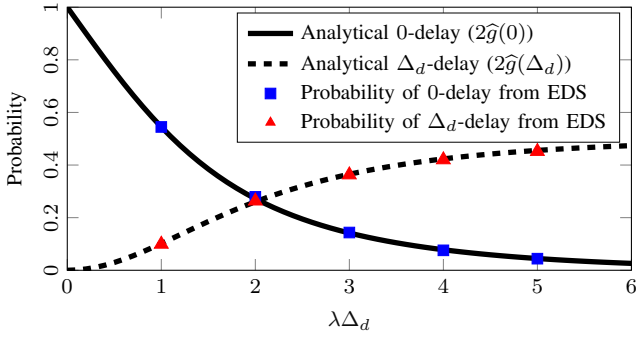
\begin{figure}[t]
\begin{center}
%
%
\begin{tikzpicture}

\begin{axis}[%
width=7.5cm,
height=3.5cm,
font = \footnotesize,
at={(1.011in,2.516in)},
scale only axis,
xmin=0,
xmax=6,
xlabel style={font=\color{white!5!black}\footnotesize},
xlabel={$\lambda\Delta_d$},
ymin=0,
ymax=1,
ylabel={Probability},
yticklabel style={rotate=90},
axis background/.style={fill=white},
legend style={legend cell align=left, align=left, draw=white!6!black},
scatter/classes={%
		a={mark=square*,blue},%
		b={mark=triangle*,red},%
		c={mark=o,draw=black}}
]
\addplot [color=black, line width=2.0pt]
  table[row sep=crcr]{%
0	1\\
0.1	0.950043106396634\\
0.2	0.900354327806761\\
0.3	0.851220706624273\\
0.4	0.802935405882331\\
0.5	0.755782394184816\\
0.6	0.710023038886288\\
0.7	0.66588555944399\\
0.8	0.623557853702323\\
0.9	0.583183768074769\\
1	0.544862512468571\\
1.1	0.508650666508067\\
1.2	0.474566096602553\\
1.3	0.44259308897361\\
1.4	0.412688072023055\\
1.5	0.384785418207774\\
1.6	0.358802950395725\\
1.7	0.334646907891192\\
1.8	0.312216239137689\\
1.9	0.291406175389907\\
2	0.272111101803196\\
2.1	0.254226782201912\\
2.2	0.237652015607469\\
2.3	0.222289811098594\\
2.4	0.208048166978768\\
2.5	0.194840534008335\\
2.6	0.182586033241344\\
2.7	0.171209488631106\\
2.8	0.160641324233801\\
2.9	0.150817366257858\\
3	0.141678581741492\\
3.1	0.133170778421523\\
3.2	0.125244284367924\\
3.3	0.11785362110226\\
3.4	0.11095718005678\\
3.5	0.10451690921462\\
3.6	0.0984980144549617\\
3.7	0.0928686783778344\\
3.8	0.0875997980863778\\
3.9	0.0826647424624717\\
4	0.0780391288045684\\
4.1	0.0737006182393335\\
4.2	0.0696287290196165\\
4.3	0.0658046666396206\\
4.4	0.0622111696023174\\
4.5	0.0588323696398381\\
4.6	0.0556536651963035\\
4.7	0.0526616070204188\\
4.8	0.0498437947718377\\
4.9	0.047188783613197\\
5	0.044685999833375\\
5.1	0.0423256646230368\\
5.2	0.0400987251982152\\
5.3	0.0379967925397438\\
5.4	0.0360120850846592\\
5.5	0.0341373777695656\\
5.6	0.0323659558850783\\
5.7	0.030691573254754\\
5.8	0.0291084143014505\\
5.9	0.0276110596090289\\
6	0.0261944546279714\\
6.1	0.0248538812101305\\
6.2	0.0235849316907653\\
6.3	0.0223834852655627\\
6.4	0.0212456864368033\\
6.5	0.0201679253264937\\
6.6	0.0191468196754312\\
6.7	0.0181791983660467\\
6.8	0.0172620863237177\\
6.9	0.0163926906662762\\
7	0.0155683879848437\\
7.1	0.0147867126510844\\
7.2	0.0140453460566401\\
7.3	0.01334210670003\\
7.4	0.0126749410447999\\
7.5	0.012041915080292\\
7.6	0.0114412065231888\\
7.7	0.0108710976040476\\
7.8	0.0103299683884592\\
7.9	0.00981629058731967\\
8	0.00932862181504897\\
8.1	0.00886560025848372\\
8.2	0.00842593972266551\\
8.3	0.00800842502288188\\
8.4	0.00761190769513338\\
8.5	0.00723530199973321\\
8.6	0.00687758119502488\\
8.7	0.00653777406025579\\
8.8	0.00621496164849524\\
8.9	0.00590827425215433\\
9	0.00561688856517314\\
9.1	0.0053400250273032\\
9.2	0.00507694533714574\\
9.3	0.0048269501217231\\
9.4	0.00458937675137208\\
9.5	0.00436359728966644\\
9.6	0.00414901656890907\\
9.7	0.00394507038249244\\
9.8	0.00375122378611513\\
9.9	0.0035669695004702\\
10	0.00339182640859347\\
};
\addlegendentry{Analytical $0$-delay ($2\widehat g(0)$)}

\addplot [color=black, dashed, line width=2.0pt]
  table[row sep=crcr]{%
0	0\\
0.1	0.00124714193392705\\
0.2	0.00495458094441947\\
0.3	0.0110226349245539\\
0.4	0.0192924806308653\\
0.5	0.0295561963815835\\
0.6	0.0415691473774646\\
0.7	0.0550634115161676\\
0.8	0.0697609934386919\\
0.9	0.0853857901482672\\
1	0.101673586085953\\
1.1	0.118379697234631\\
1.2	0.135284193352549\\
1.3	0.152194867431669\\
1.4	0.168948277622802\\
1.5	0.185409263261827\\
1.6	0.201469348630461\\
1.7	0.217044415445607\\
1.8	0.232071966863217\\
1.9	0.24650823751003\\
2	0.260325336646403\\
2.1	0.273508551599106\\
2.2	0.286053889189655\\
2.3	0.297965894692143\\
2.4	0.309255760069474\\
2.5	0.319939714295905\\
2.6	0.330037676673721\\
2.7	0.339572147456797\\
2.8	0.348567307293205\\
2.9	0.357048296766592\\
3	0.365040648708158\\
3.1	0.372569848274597\\
3.2	0.379660998555509\\
3.3	0.386338572360434\\
3.4	0.392626233634783\\
3.5	0.398546714545697\\
3.6	0.404121736602831\\
3.7	0.409371966214418\\
3.8	0.414316996829501\\
3.9	0.418975351301413\\
4	0.423364499351369\\
4.1	0.427500886043188\\
4.2	0.431399968029682\\
4.3	0.435076255025361\\
4.4	0.4385433545235\\
4.5	0.441814018230217\\
4.6	0.444900189052897\\
4.7	0.447813047771195\\
4.8	0.450563058749531\\
4.9	0.453160014231866\\
5	0.455613076901912\\
5.1	0.457930820502578\\
5.2	0.460121268393709\\
5.3	0.462191929992139\\
5.4	0.464149835086976\\
5.5	0.466001566059216\\
5.6	0.467753288060977\\
5.7	0.469410777228012\\
5.8	0.470979447011463\\
5.9	0.472464372722448\\
6	0.473870314387117\\
6.1	0.475201738011183\\
6.2	0.476462835352262\\
6.3	0.477657542296249\\
6.4	0.478789555930807\\
6.5	0.479862350405162\\
6.6	0.480879191661085\\
6.7	0.48184315111536\\
6.8	0.48275711836932\\
6.9	0.483623813016349\\
7	0.484445795613625\\
7.1	0.485225477879856\\
7.2	0.485965132176499\\
7.3	0.486666900325794\\
7.4	0.487332801815066\\
7.5	0.487964741433112\\
7.6	0.48856451638098\\
7.7	0.489133822896319\\
7.8	0.489674262427408\\
7.9	0.49018734739025\\
8	0.490674506539494\\
8.1	0.491137089981584\\
8.2	0.491576373856327\\
8.3	0.491993564711013\\
8.4	0.49238980358937\\
8.5	0.492766169855888\\
8.6	0.493123684774447\\
8.7	0.49346331485874\\
8.8	0.4937859750106\\
8.9	0.494092531461138\\
9	0.49438380452841\\
9.1	0.49466057120433\\
9.2	0.494923567582549\\
9.3	0.49517349113814\\
9.4	0.495411002869127\\
9.5	0.495636729309129\\
9.6	0.49585126441971\\
9.7	0.496055171370393\\
9.8	0.49624898421371\\
9.9	0.496433209462111\\
10	0.496608327573097\\
};
\addlegendentry{Analytical $\Delta_d$-delay ($2\widehat g(\Delta_d)$)}

\addplot[scatter,only marks,mark size=2.0pt,%
		scatter src=explicit symbolic]%
	table[meta=label] {
x     y      label
1.0 0.5448 a
2.0 0.279 a
3.0 0.1435 a
4.0 0.0758 a
5.0 0.0441 a
	};
\addlegendentry{Probability of $0$-delay from EDS}

\addplot[scatter,only marks, mark size=3.0pt,%
		scatter src=explicit symbolic]%
	table[meta=label] {
x     y      label
1.0 0.0991 b
2.0 0.2633 b
3.0 0.3627 b
4.0 0.42 b
5.0 0.4518 b
	};
\addlegendentry{Probability of $\Delta_d$-delay from EDS}

\end{axis}
\end{tikzpicture}%
\caption{The probability of delay with respect to parameter $\lambda\Delta_d$.}
\label{fig: point delay}
\end{center}
\end{figure}

We validate the event-driven model against the time-driven traffic simulation. By ergodicity \eqref{eq: ergodicity}, the mean delay of all vehicles in the time-driven traffic simulation should equal the expectation of the delay induced by any event in the steady state. The statistical mean delay under each scenario is obtained through the time-driven traffic simulation for \SI{10}{\minute}. The details of the simulation is provided in an earlier journal \cite{liu2017distributed}. The expected steady state delay is computed using the result from \Cref{prop: fo delta_s=0}. 
Given \eqref{eq: delay macro}, the steady state probability density of delay for $t\in[0,\Delta_d)$ satisfies
\begin{equation}
p_d(t) = \int_0^{\infty} [g(t+x)+g(t+x-\Delta_d)+g(x-t+\Delta_d)]p_{x}dx\text{.}\label{eq: pd}
\end{equation}
Applying \eqref{eq: fo g}, the distribution of steady state delay becomes
\begin{equation}
P_d(t)  = \frac{4C}{\lambda}\left[e^{\frac{\lambda}{2}t} - e^{\frac{\lambda}{2}(\Delta_d -t)} + e^{\frac{\lambda}{2}\Delta_d-\lambda t}\right] + \frac{1-e^{-\lambda t}}{2}\text{.}\label{eq: Pd}
\end{equation}
\Cref{fig: pd} shows the distributions. The shaded area is obtained though EDS with $10^4$ particles, which validates \eqref{eq: Pd}. 

\begin{figure}[t]
\begin{center}
%
%
\definecolor{mycolor1}{rgb}{0.46600,0.67400,0.18800}%
\begin{tikzpicture}

\begin{axis}[%
width=7.5cm,
height=1.5cm,
font=\footnotesize,
at={(1.011in,2.416in)},
scale only axis,
xmin=0,
xmax=4,
xlabel style={font=\color{white!5!black}\footnotesize},
xlabel={Delay $d$ [\si{\second}]},
ymin=0,
ymax=1,
ylabel={Probablity},
yticklabel style={rotate=90},
ytick={0,0.5,1},
axis background/.style={fill=white},
legend style={at={(0.5,1.1)},anchor=south},
legend style={legend cell align=left, align=left, draw=white!6!black, font=\footnotesize},
legend columns=2
]
\addplot [color=black!25!red, line width=2.0pt]
  table[row sep=crcr]{%
0	0.544862512468571\\
0.1	0.57870433703729\\
0.2	0.615448935045106\\
0.3	0.654930679882148\\
0.4	0.697015294285577\\
0.5	0.741597182612254\\
0.6	0.788597056746132\\
0.7	0.837959828585836\\
0.8	0.88965274473412\\
0.9	0.943663741434946\\
1	1\\
1	1\\
2	1\\
};
\addlegendentry{$\Delta_d=\SI{1}{\second}$ Analytical}

\addplot [color=black!50!red, line width=2.0pt]
  table[row sep=crcr]{%
0	0.272111101803196\\
0.2	0.32767298124413\\
0.4	0.387421587480993\\
0.6	0.450883343544667\\
0.8	0.517816075991622\\
1	0.588171381217475\\
1.2	0.662065313562302\\
1.4	0.73975594489743\\
1.6	0.821626623874399\\
1.8	0.908173991764305\\
2	1\\
2	1\\
4	1\\
};
\addlegendentry{$\Delta_d=\SI{2}{\second}$ Analytical}


\addplot [color=black!75!red, line width=2.0pt]
  table[row sep=crcr]{%
0	0.141678581741492\\
0.3	0.218072795854256\\
0.6	0.294923886968054\\
0.9	0.372198666182505\\
1.2	0.450331102420687\\
1.5	0.530113501667277\\
1.8	0.61262650375549\\
2.1	0.699198351469569\\
2.4	0.791386611547074\\
2.7	0.890977563036718\\
3	1\\
3	1\\
6	1\\
};
\addlegendentry{$\Delta_d=\SI{3}{\second}$ Analytical}

\addplot [color=black, line width=2.0pt]
  table[row sep=crcr]{%
0	0.0780391288045684\\
0.4	0.174578301791518\\
0.8	0.264324914513543\\
1.2	0.348814460442389\\
1.6	0.430052683734533\\
2	0.510371487186262\\
2.4	0.592371972769355\\
2.8	0.678927918373731\\
3.2	0.773233467516934\\
3.6	0.878885977680569\\
4	1\\
4	1\\
8	1\\
};
\addlegendentry{$\Delta_d=\SI{4}{\second}$ Analytical}

\addplot[ybar interval, fill=mycolor1, fill opacity=0.5, draw=none, area legend] table[row sep=crcr] {%
x	y\\
0	0.0866\\
0.04	0.0965\\
0.08	0.1081\\
0.12	0.1172\\
0.16	0.1273\\
0.2	0.1381\\
0.24	0.1472\\
0.28	0.1579\\
0.32	0.1683\\
0.36	0.1789\\
0.4	0.1874\\
0.44	0.1966\\
0.48	0.2058\\
0.52	0.2152\\
0.56	0.2228\\
0.6	0.2309\\
0.64	0.2416\\
0.68	0.2514\\
0.72	0.2595\\
0.76	0.269\\
0.8	0.2776\\
0.84	0.2858\\
0.88	0.2941\\
0.92	0.3022\\
0.96	0.3111\\
1	0.319\\
1.04	0.3258\\
1.08	0.3332\\
1.12	0.3417\\
1.16	0.3497\\
1.2	0.3579\\
1.24	0.3672\\
1.28	0.3758\\
1.32	0.3842\\
1.36	0.3919\\
1.4	0.3974\\
1.44	0.4054\\
1.48	0.4122\\
1.52	0.4212\\
1.56	0.4311\\
1.6	0.4385\\
1.64	0.4473\\
1.68	0.4534\\
1.72	0.4605\\
1.76	0.4681\\
1.8	0.4765\\
1.84	0.4856\\
1.88	0.4935\\
1.92	0.5023\\
1.96	0.5122\\
2	0.5199\\
2.04	0.5274\\
2.08	0.5349\\
2.12	0.5432\\
2.16	0.551\\
2.2	0.5598\\
2.24	0.5658\\
2.28	0.5733\\
2.32	0.5819\\
2.36	0.5919\\
2.4	0.6007\\
2.44	0.6086\\
2.48	0.6168\\
2.52	0.6266\\
2.56	0.6345\\
2.6	0.6435\\
2.64	0.6532\\
2.68	0.6625\\
2.72	0.6707\\
2.76	0.6798\\
2.8	0.6879\\
2.84	0.6979\\
2.88	0.7053\\
2.92	0.7165\\
2.96	0.7262\\
3	0.7333\\
3.04	0.7414\\
3.08	0.7511\\
3.12	0.7622\\
3.16	0.7741\\
3.2	0.7852\\
3.24	0.7947\\
3.28	0.8048\\
3.32	0.8151\\
3.36	0.8257\\
3.4	0.8372\\
3.44	0.8492\\
3.48	0.8585\\
3.52	0.8691\\
3.56	0.8793\\
3.6	0.8885\\
3.64	0.9001\\
3.68	0.9115\\
3.72	0.9235\\
3.76	0.9365\\
3.8	0.9496\\
3.84	0.9615\\
3.88	0.9752\\
3.92	0.9881\\
3.96	1\\
4	1\\
};
\addlegendentry{$\Delta_d = \SI{4}{\second}$ EDS}

\end{axis}
\end{tikzpicture}%
\caption{Illustration of $P_d$ in \eqref{eq: Pd} for $\lambda=\SI{1}{\per\second}$ with different $\Delta_d$.}
\label{fig: pd}
\end{center}
\vspace{-10pt}
\end{figure}

The expected delay can be computed as:
\begin{equation}
E(d) = \int_0^{\Delta_d} t dP_d(t) = \frac{\Delta_d}{2} + \frac{e^{-\lambda\Delta_d}-1}{2\lambda(e^{\frac{\lambda}{2}\Delta_d} +e^{-\frac{\lambda}{2}\Delta_d} - 1)}\text{.}\label{eq: theoretical expected delay}
\end{equation}
When $\lambda\rightarrow 0$, i.e., the traffic flow rate is low, $E(d) \rightarrow \frac{\Delta_d}{2} + \frac{-\lambda\Delta_d+\frac{(\lambda\Delta_d)^2}{2}}{2\lambda}=\frac{\lambda \Delta_d^2}{4}$. 

\begin{figure}[t]
\vspace{3pt}
%
%
\definecolor{mycolor1}{rgb}{0,0,1}%
\definecolor{mycolor2}{rgb}{1,0,0}%
\begin{tikzpicture}

\begin{axis}[%
width=7.5cm,
height=2.5cm,
font=\footnotesize,
at={(1.011in,0.809in)},
scale only axis,
xmin=0,
xmax=1.3,
xlabel style={font=\color{white!5!black}\footnotesize},
xlabel={Traffic flow rate $\lambda$ [\si{\per\second}]},
ylabel style={font=\color{white!5!black}\footnotesize},
ylabel={Expected delay $d$ [\si\second]},
yticklabel style={rotate=90},
ymin=0,
ymax=1.1,
axis background/.style={fill=white},
legend style={legend cell align=left, align=left, draw=white!15!black},
legend pos=north west,
]
\addplot [color=black, line width=1pt]
  table[row sep=crcr]{%
 0 0\\
0.1	0.0574373733355666\\
0.2	0.116329882366903\\
0.3	0.175265794269032\\
0.4	0.232903334103045\\
0.5	0.288087748218657\\
0.6	0.339925580618839\\
0.7	0.387811987893813\\
0.8	0.431418104134563\\
0.9	0.470651540469841\\
1	0.505603944731428\\
1.1	0.536496902971836\\
1.2	0.56363347738677\\
1.3	0.587358921160895\\
1.4	0.608031346714374\\
1.5	0.626001470657415\\
1.6	0.641599804448573\\
1.7	0.655129498817955\\
1.8	0.666863213080389\\
1.9	0.67704267794858\\
2	0.685879939578904\\
2.1	0.693559557291352\\
2.2	0.700241257192331\\
2.3	0.706062717537616\\
2.4	0.71114228652526\\
2.5	0.715581519384431\\
2.6	0.719467478871268\\
2.7	0.722874779797498\\
2.8	0.72586738038535\\
2.9	0.728500135784418\\
3	0.730820135366757\\
3.1	0.732867847763079\\
3.2	0.734678097584656\\
3.3	0.736280896424993\\
3.4	0.737702148714174\\
3.5	0.738964250718415\\
3.6	0.740086598682948\\
3.7	0.741086019945682\\
3.8	0.741977138871257\\
3.9	0.742772687697456\\
4	0.743483770850551\\
};
\addlegendentry{Analytical}

\addplot [color=black, dashed, line width=1pt]
  table[row sep=crcr]{%
0	0\\
1.33333333333333	0.75\\
};
\addlegendentry{Approximated}

\addplot [only marks, color=mycolor1, draw=none, mark=square*, mark size=2pt, mark options={solid, mycolor1}]
  table[row sep=crcr]{%
0.2	0.116\\
0.4	0.233\\
0.6	0.339\\
0.8	0.431\\
1	0.505\\
1.2	0.563\\
};
\addlegendentry{EDS}

\addplot [color=mycolor2, draw=none, mark size=1pt, mark=*, mark options={solid, mycolor2}]
 plot [error bars/.cd, y dir = both, y explicit]
 table[row sep=crcr, y error plus index=2, y error minus index=3]{%
0.2	0.07	0.01	0.01\\
0.4	0.18	0.02	0.02\\
0.6	0.28	0.02	0.02\\
0.8	0.49	0.025	0.025\\
1	0.56	0.03	0.03\\
1.2	1.03	0.06	0.06\\
};
\addlegendentry{Traffic Sim}

\end{axis}
\end{tikzpicture}%
\caption{The expected delay for $\Delta_d = \SI{1.5}{\second}$ and different $\lambda$.}
\label{fig: mean delay}
\vspace{-10pt}
\end{figure}

When $\Delta_d = \SI{1.5}{\second}$, \cref{fig: mean delay} shows the analytical expected delay \eqref{eq: theoretical expected delay}, the approximation function $\frac{\lambda \Delta_d^2}{4}$, the expected delay obtained through EDS, and the statistical mean delay in the simulation. When the flow rate is low, the delays obtained through the four methods align well. As the flow increases, the model underestimates the traffic delay because the assumption $\Delta_s=0$ is only valid for small $\lambda$.

\section{Discussion and Conclusion\label{sec: discussion}}
This paper introduced an analytical traffic model for unmanaged intersections. The macroscopic property, i.e., delay at the intersection, was modeled as an event-driven stochastic dynamic process. The macroscopic dynamics encoded the equilibrium resulted from microscopic vehicle interactions. Both the vehicle policies and the road topology could affect the macroscopic dynamics. With the model, the distribution of delay can be obtained through either direct analysis or event-driven simulation, which are more efficient than conventional time-driven traffic simulation and capture more microscopic details than conventional macroscopic flow models. 
The steady state traffic properties were studied, and the accuracy was verified in simulation. 

The potential applications of the analytical model include 1) efficient verification or comparison of policies through analysis or event-driven simulation; 2) policy optimization (e.g., choosing optimal $\Delta_d$) with respect to macroscopic objectives; 3) real-time traffic prediction for intersections; and 4) infrastructure optimization (e.g., designing better road structure and network) to improve traffic efficiency.


\bibliographystyle{IEEEtran}
\bibliography{traffic}

\end{document}